\newtheorem{theorem}{\textbf{Theorem}}[section]
\newtheorem{definition}[theorem]{\textbf{Definition}}
\newtheorem{corollary}[theorem]{\textbf{Corollary}}
\newtheorem{lemma}[theorem]{\textbf{Lemma}}
\newtheorem{conjecture}[theorem]{\textbf{Conjecture}}
\begin{document}

\title{The Bethe Partition Function of Log-supermodular Graphical Models}
\author{Nicholas Ruozzi}
\author{
Nicholas Ruozzi\footnote{This work was supported by EC grant FP7-265496, ``STAMINA''.}\\
Communication Theory Laboratory\\
EPFL\\
Lausanne, Switzerland \\
\texttt{nicholas.ruozzi@epfl.ch} \\
}
\date{}
\maketitle

\begin{abstract}
Sudderth, Wainwright, and Willsky have conjectured that the Bethe approximation corresponding to any fixed point of the belief propagation algorithm over an attractive, pairwise binary graphical model provides a lower bound on the true partition function.  In this work, we resolve this conjecture in the affirmative by demonstrating that, for any graphical model with binary variables whose potential functions (not necessarily pairwise) are all log-supermodular, the Bethe partition function always lower bounds the true partition function.  The proof of this result follows from a new variant of the ``four functions'' theorem that may be of independent interest.
\end{abstract}

\section{Introduction}

Graphical models have proven to be a useful tool for performing approximate inference in a wide variety of application areas including computer vision, combinatorial optimization, statistical physics, and wireless networking. Computing the partition function of a given graphical model,  a typical inference problem, is  an NP-hard problem in general.  Because of this, the inference problem is often replaced by a variational approximation that is, hopefully, easier to solve.  The Bethe approximation, one such standard approximation, is of great interest both because of its practical performance and because of its relationship to the belief propagation (BP) algorithm:  stationary points of the Bethe free energy function correspond to fixed points of belief propagation \cite{yedweiss}. However, the Bethe partition function is only an approximation to the true partition function and need not provide an upper or lower bound.   

In certain special cases, the Bethe approximation is conjectured to provide a bound on the true partition function.  One such example is the class of attractive pairwise graphical models: models in which the interaction between any two neighboring variables places a greater weight on assignments in which the two variables agree.   Many applications in computer vision and statistical physics can be expressed as attractive pairwise graphical models (e.g., the ferromagnetic Ising model). Sudderth, Wainwright, and Willsky \cite{sudderth07} used a loop series expansion of Chertkov and Chernyak \cite{chert, gomez} in order to study the fixed points of BP over attractive graphical models. They provided conditions on the fixed points of BP under which the stationary points of the Bethe free energy function corresponding to these fixed points is a lower bound on the true partition function.  Empirically, they observed that, even when their conditions were not satisfied, the Bethe partition function appeared to lower bound the true partition function, and they conjectured that this is always the case for attractive pairwise binary graphical models.

Recent work on the relationship between the Bethe partition function and the graph covers of a given graphical model has suggested a new approach to resolving this conjecture.  Vontobel \cite{pascalcounting} demonstrated that the Bethe partition function can be precisely characterized by the average of the true partition functions corresponding to covers of the base graphical model.  The primary contribution of the present work is to show that, for graphical models with log-supermodular potentials, the partition function associated with any graph cover of the base graph, appropriately normalized, must lower bound the true partition function.  As pairwise binary graphical models are log-supermodular if and only if they are attractive, combining our result with the observations of \cite{pascalcounting} resolves the conjecture of \cite{sudderth07}.  

The key element in our proof, and the second contribution of this work, is a new variant of the ``four functions'' theorem that is specific to log-supermodular functions.  We state and prove this variant in Section \ref{sec:kfunc}, and in Section \ref{sec:part}, we use it to resolve the conjecture. As a final contribution, we demonstrate that our variant of the ``four functions'' theorem has applications beyond log-supermodular functions: we use it to show that the Bethe partition function can also provide a lower bound on the number of independent sets in a bipartite graph.

\section{Undirected Graphical Models}
Let $f:\{0,1\}^n\rightarrow \mathbb{R}_{\geq 0}$ be a non-negative function.  We say that $f$ factors with respect to a hypergraph $G = (V, \mathcal{A})$ where $\mathcal{A} \subseteq 2^{V}$, if there exist potential functions $\phi_i:\{0,1\}\rightarrow \mathbb{R}_{\geq 0}$ for each $i\in V$ and $\psi_{\alpha}:\{0,1\}^{|\alpha|}\rightarrow \mathbb{R}_{\geq 0}$ for each $\alpha\in \mathcal{A}$ such that
\[f(x) = \prod_{i\in V} \phi_i(x_i) \prod_{\alpha\in \mathcal{A}} \psi_{\alpha}(x_\alpha)\]
where $x_\alpha$ is the subvector of the vector $x$ indexed by the set $\alpha$.

We will express the hypergraph $G$ as a bipartite graph that consists of a variable node for each $i\in V$, a factor node for each $\alpha\in\mathcal{A}$, and an edge joining the factor node corresponding to $\alpha$ to the variable node
representing $i$ if $i \in \alpha$.  This is typically referred to as the \textit{factor graph} representation of $G$. 

\begin{definition}
A function $f:\{0,1\}^n\rightarrow \mathbb{R}_{\geq 0}$ is \textbf{log-supermodular} if for all $x,y\in\{0,1\}^n$
\[f(x)f(y) \leq f(x\wedge y) f(x\vee y)\]
where $(x\wedge y)_i = \min \{x_i, y_i\}$ and $(x\vee y)_i = \max \{x_i, y_i\}$.
Similarly, a function $f:\{0,1\}^n\rightarrow \mathbb{R}_{\geq 0}$ is \textbf{log-submodular} if for all $x,y\in\{0,1\}^n$
\[f(x)f(y) \geq f(x\wedge y) f(x\vee y)\]
\end{definition}

\begin{definition}
A factorization of a function $f:\{0,1\}^n\rightarrow \mathbb{R}_{\geq 0}$ over $G=(V,\mathcal{A})$ is log-supermodular if for all $\alpha\in\mathcal{A}$, $\psi_\alpha(x_\alpha)$ is log-supermodular.  
\end{definition}

Every function that admits a log-supermodular factorization is necessarily log-supermodular as products of log-supermodular functions are easily seen to be log-supermodular, but the converse may not be true outside of special cases.  If $|\alpha| \leq 2$ for each $\alpha\in\mathcal{A}$, then we call the factorization pairwise.  For any pairwise factorization, $f$ is log-supermodular if and only if $\psi_{ij}$ is log-supermodular for each $i$ and $j$.  

Pairwise graphical models such that $\psi_\alpha(x_\alpha)$ is log-supermodular for all $\alpha\in\mathcal{A}$ are referred to as \textit{attractive} graphical models. A generalization of attractive interactions to the non-pairwise case is presented in \cite{sudderth07}: for all $\alpha\in\mathcal{A}$, $\psi_\alpha$, when appropriately normalized, has non-negative central moments.  

\subsection{Graph Covers}
Graph covers have played an important role in our understanding graphical models \cite{pascalcounting, vontobel}.  

\begin{definition}
A graph $H$ \textbf{covers} a graph $G = (V, E)$ if there exists a graph homomorphism
$h: H \rightarrow G$ such that for all vertices $v\in G$ and all $w\in h^{-1}(v)$, $h$ maps the neighborhood
$\partial w$ of $w$ in $H$ bijectively to the neighborhood $\partial v$ of $v$ in $G$.  If $h(w) = v$, then we say that $w\in
H$ is a copy of $v\in G$.  Further, $H$ is a $k$-cover of $G$ if every vertex of
$G$ has exactly $k$ copies in $H$.
\end{definition}

Roughly, if a graph $H$ covers a graph $G$, then $H$ looks locally the same as $G$.  For an example of a graph cover, see Figure \ref{fig:cover}.

\begin{figure}
\centering
\subfigure[][A graph, $G$.]{
  \begin{tikzpicture}[scale=1.5]
	\tikzstyle{every node}=[draw,shape=circle];
	\path (0,0) node (X0) {$1$};
	\path (1,0) node (X1) {$2$};
	\path (1,-1) node (X2) {$3$};
	\path (0,-1) node (X3) {$4$};
	
	\draw (X0) -- (X1);
	\draw (X0) -- (X2);
	\draw (X2) -- (X3);
	\draw (X1) -- (X2);
	\draw (X1) -- (X3);
	\end{tikzpicture}
}\hspace{1cm}
\subfigure[][One possible cover of $G$.]{  
	\begin{tikzpicture}[scale=3]
	\tikzstyle{every node}=[draw,shape=circle];
	\path (0,0) node (X0) {$1$};
	\path (.5,0) node (X1) {$2$};
	\path (1,0) node (X2) {$3$};
	\path (1.5,0) node (X3) {$4$};
	\path (0,-.5) node (Y0) {$1$};
	\path (.5,-.5) node (Y1) {$2$};
	\path (1,-.5) node (Y2) {$3$};
	\path (1.5,-.5) node (Y3) {$4$};
	
	\draw (X0) -- (X1);
	\draw (X0) -- (Y2);
	\draw (Y1) -- (Y0);
	\draw (Y1) -- (Y2);
	\draw (X1) -- (Y3);
	\draw (X2) -- (Y0);
	\draw (X2) -- (X1);
	\draw (X2) -- (Y3);
	\draw (X3) -- (Y1);
	\draw (X3) -- (Y2);
		
	\end{tikzpicture}}
\caption{An example of a graph cover.  The nodes in the cover are labeled for the node that they copy in the base graph.}
\label{fig:cover}
\end{figure}
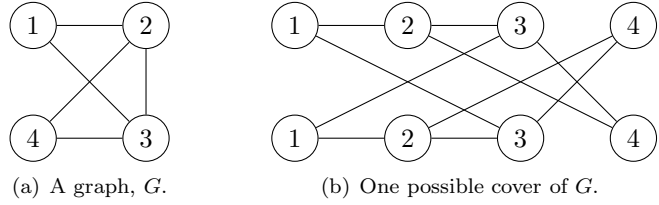

For the factor graph corresponding to $G = (V,\mathcal{A})$, each $k$-cover consists of a variable node for each of the $k|V|$ variables, a factor node for each of the $k|\mathcal{A}|$ factors, and an edge joining each copy of $\alpha\in\mathcal{A}$ to a distinct copy of each $i\in\alpha$.  To any $k$-cover $H = (V_H, \mathcal{A}_H)$ of $G$, we can associate a collection of potentials: the
potential at node $i\in V_H$ is equal to $\phi_{h(i)}$, the potential at node $h(i) \in G$, and for each $\alpha\in\mathcal{A}_H$, we associate the potential $\psi_{h(\alpha)}$.  In this way, we can construct a function $f^H:\{0,1\}^{kn}\rightarrow \mathbb{R}_{\geq 0}$ such that $f^H$ factorizes over $H$.

Notice that if $f^G$ admits a log-supermodular factorization over $G$ and $H$ is a $k$-cover of $G$, then $f^H$ admits a log-supermodular factorization over $H$.

\subsection{Bethe Approximations}
For a function $f:\{0,1\}^n\rightarrow \mathbb{R}_{\geq 0}$ that factorizes over $G = (V,\mathcal{A})$, we are interested computing the partition function $Z(G) = \sum_x f(x)$.  In general, this is an NP-hard problem, but in practice, algorithms, such as belief propagation, based on variational approximations produce reasonable estimates in certain settings.  One such variational approximation, the Bethe approximation at temperature $T = 1$, is defined as follows:
\begin{align*}
\log Z_\mathrm{B}(G, \tau) &= \sum_{i\in V}\sum_{x_i} \tau_i(x_i)\log\phi_i(x_i) + \sum_{\alpha\in\mathcal{A}} \sum_{x_\alpha} \tau_\alpha(x_\alpha)\log\psi_\alpha(x_\alpha)\\
&\:{-} \sum_{i\in V}\sum_{x_i} \tau_i(x_i)\log\tau_i(x_i) - \sum_{\alpha\in\mathcal{A}} \sum_{x_\alpha} \tau_\alpha(x_\alpha)\log \frac{\tau_\alpha(x_\alpha)}{\prod_{i\in\alpha} \tau_i(x_i) }
\end{align*}
for $\tau$ in the local marginal polytope, \[\mathcal{T} \triangleq \{\tau\geq 0\hspace{.1cm} |\hspace{.1cm} \forall \alpha\in\mathcal{A}, i\in\alpha, \sum_{x_{\alpha\setminus i}} \tau_\alpha(x_\alpha) = \tau_i(x_i)\text{ and } \forall i\in V, \sum_{x_i} \tau_i(x_i) = 1\}.\]

The fixed points of the belief propagation algorithm correspond to stationary points of $\log Z_{\mathrm{B}}(G, \tau)$ over $\mathcal{T}$, the set of pseudomarginals  \cite{yedweiss}, and the Bethe partition function is defined to be the maximum value achieved by this approximation over $\mathcal{T}$:
\[Z_\mathrm{B}(G) = \max_{\tau\in\mathcal{T}} Z_\mathrm{B}(G, \tau).\]

For a fixed factor graph $G$, we are interested in the relationship between the true partition function, $Z(G)$, and the Bethe approximation corresponding to $G$, $Z_\mathrm{B}(G)$.  While, in general, $Z_\mathrm{B}(G)$ can be either an upper or a lower bound on the true partition function, in this work, we address the following conjecture of \cite{sudderth07}:
\begin{conjecture}
If $f:\{0,1\}^n\rightarrow \mathbb{R}_{\geq 0}$ admits a pairwise, log-supermodular factorization over $G = (V,\mathcal{A})$, then $Z_\mathrm{B}(G) \leq Z(G)$.\label{conj:bethecover}
\end{conjecture}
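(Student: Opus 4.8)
The plan is to prove the stronger, non-pairwise statement that underlies the conjecture: for every $k$-cover $H$ of $G$, the cover partition function satisfies $Z(H) \le Z(G)^k$. Granting this, the conjecture follows from Vontobel's characterization \cite{pascalcounting} of the Bethe partition function as the asymptotic growth rate of the average partition function over $k$-covers: since every $k$-cover obeys $Z(H) \le Z(G)^k$, so does the average, hence $\sqrt[k]{\langle Z(H)\rangle} \le Z(G)$ for all $k$, and passing to the limit gives $Z_\mathrm{B}(G) \le Z(G)$. Because a pairwise factorization is log-supermodular exactly when it is attractive, and because (as noted above) a log-supermodular factorization of $f^G$ lifts to one of $f^H$ on any cover, it suffices to bound $Z(H)$ for an arbitrary log-supermodular factorization.

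First I would record the pointwise inequality that drives everything. Identify a configuration $z$ on $H$ with the values it assigns to the $k$ copies of each variable; for each $i \in V$ let $m_i$ be the number of copies of $i$ set to $1$, and let $w^1 \ge \cdots \ge w^k$ be the chain of base configurations with $w^j_i = 1$ iff $j \le m_i$ (the coordinatewise order statistics of the copy values). Each of the $k$ copies of a factor $\alpha$ is a copy of $\psi_\alpha$ evaluated on an assignment whose value multiset at each $i \in \alpha$ is exactly that of the copies of $i$, \emph{independently of how $H$ wires factor copies to variable copies}. Iterating the log-supermodularity inequality --- replacing two assignments by their meet and join never decreases the product --- sorts these $k$ assignments into the chain and gives $\prod_{b} \psi_\alpha(\cdot) \le \prod_j \psi_\alpha(w^j_\alpha)$, with the same chain serving every factor since it depends only on the copy multisets. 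Multiplying over all factors, together with the variable terms (which match identically), yields
\[ f^H(z) \;\le\; \prod_{j=1}^{k} f^G(w^j). \]

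The main obstacle is the summation over $z$, and this is exactly what the new four-functions variant of Section \ref{sec:kfunc} is designed to handle. The naive route --- bounding $Z(H) = \sum_z f^H(z)$ termwise and counting fibers --- is too lossy: the fiber of each chain under $z \mapsto (w^1,\dots,w^k)$ has size $\prod_i \binom{k}{m_i}$, and the identical multiplicity appears when one groups the $k$-tuples summed in $Z(G)^k = \big(\sum_x f^G(x)\big)^k$ by their sorted chain and applies log-supermodularity. Consequently both $Z(H)$ and $Z(G)^k$ are dominated by the same ``chain sum'' $\sum_{w^1\ge\cdots\ge w^k}\big(\prod_i\binom{k}{m_i}\big)\prod_j f^G(w^j)$, so this crude comparison cannot separate them.

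To obtain the inequality in the correct direction I would instead compare $\sum_z f^H(z)$ with the decoupled sum $\sum_{(x^1,\dots,x^k)} \prod_j f^G(x^j) = Z(G)^k$ directly, viewing a cover configuration as a $k$-tuple of base configurations and treating $f^H$ and $\prod_j f^G(x^j)$ as two functions on the $k$-fold product lattice. The content of the generalized four-functions theorem is precisely that the log-supermodular sorting relation between these two functions upgrades to the desired inequality between their total sums, bypassing the overcounting. Establishing that theorem --- an $m$-fold, order-statistic analogue of Ahlswede--Daykin valid for log-supermodular data --- is where the real work lies; I expect its proof, like that of the classical four-functions theorem, to proceed by induction on the number of variables with a short base case on a single coordinate.
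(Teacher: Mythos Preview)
Your plan is correct and matches the paper's approach essentially line for line: establish the pointwise sorting inequality $f^H(x^1,\dots,x^k)\le\prod_j f^G(z^j(x^1,\dots,x^k))$ via log-supermodularity of the lifted potentials, invoke the new log-supermodular variant of the four functions theorem (proved by induction on $n$ with the $n=1$ base case) to pass to sums and obtain $Z(H)\le Z(G)^k$, and then conclude $Z_\mathrm{B}(G)\le Z(G)$ from Vontobel's characterization. Your diagnosis of why the naive fiber-counting argument stalls is a nice addition not spelled out in the paper, but the substantive steps are the same.
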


We resolve this conjecture in the affirmative, and show that it continues to hold for a larger class of log-supermodular functions.  Our results are based, primarily, on two observations:  a variant of the ``four functions'' theorem \cite{4func} and the following, recent, theorem of Vontobel \cite{pascalcounting}:

\begin{theorem}
\[Z_\mathrm{B}(G) = \lim \sup_{k\rightarrow\infty} \sqrt[k]{\sum_{H\in \mathcal{C}^k(G)} Z(H)/|\mathcal{C}^k(G)|}\]
where $\mathcal{C}^k(G)$ is the set of all $k$-covers of $G$. \label{thm:pascal}
\end{theorem}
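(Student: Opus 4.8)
The plan is to evaluate the average $\frac{1}{|\mathcal{C}^k(G)|}\sum_{H} Z(H)$ directly by expanding each $Z(H) = \sum_{x} f^H(x)$ and regrouping the resulting sum over all pairs $(H,x)$ according to the \emph{type} of the configuration. Given a $k$-cover $H$ and an assignment $x \in \{0,1\}^{kn}$, each variable $i\in V$ has $k$ copies, and I record $\tau_i(x_i)$ as the fraction of those copies set to $x_i$; similarly each factor $\alpha$ has $k$ copies, and $\tau_\alpha(x_\alpha)$ is the fraction of them whose incident variable copies read off the tuple $x_\alpha$. The marginalization constraints inherent in the cover force $\tau = (\tau_i, \tau_\alpha)$ to lie in $\mathcal{T}$, and it has all coordinates in $\frac{1}{k}\mathbb{Z}$. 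Crucially, the number of distinct types for a given $k$ is only polynomial in $k$, which will let me reduce the $k$-th root of a sum to the $k$-th root of its largest term.

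The two ingredients I need are that the weight $f^H(x)$ depends only on the type, and that the number of pairs realizing a given type can be counted in closed form. The first is immediate: since $f^H(x) = \prod_{i}\prod_{x_i}\phi_i(x_i)^{k\tau_i(x_i)}\prod_\alpha\prod_{x_\alpha}\psi_\alpha(x_\alpha)^{k\tau_\alpha(x_\alpha)}$, we have $\log f^H(x) = k\, E(\tau)$, where $E(\tau)$ is exactly the average-energy part of $\log Z_\mathrm{B}(G,\tau)$. For the count, I would first choose, for each $i$, which of its $k$ copies take each value — a multinomial $\binom{k}{(k\tau_i(x_i))_{x_i}}$ — then, for each factor $\alpha$, choose which of its $k$ copies read off each tuple, another multinomial $\binom{k}{(k\tau_\alpha(x_\alpha))_{x_\alpha}}$, and finally count the edge permutations wiring copies of $\alpha$ to copies of each $i\in\alpha$ consistently with these choices. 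Because marginalization is already enforced, the last count factorizes over incidences as $\prod_{\alpha}\prod_{i\in\alpha}\prod_{x_i}(k\tau_i(x_i))!$, so the number $N_k(\tau)$ of pairs of type $\tau$ is an explicit product of factorials.

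Dividing $N_k(\tau)$ by $|\mathcal{C}^k(G)| = (k!)^{\sum_\alpha|\alpha|}$ and applying Stirling's approximation, the $\log k$ and linear-in-$k$ terms cancel against the denominator, and a short computation identifies the leading exponential rate as $\frac{1}{k}\log\frac{N_k(\tau)}{|\mathcal{C}^k(G)|} = H_\mathrm{B}(\tau) + o(1)$, where $H_\mathrm{B}(\tau) = \sum_i(1-d_i)H(\tau_i) + \sum_\alpha H(\tau_\alpha)$ ($d_i$ the degree of $i$) is precisely the Bethe entropy appearing in $\log Z_\mathrm{B}(G,\tau)$. Assembling the pieces gives $\frac{1}{|\mathcal{C}^k(G)|}\sum_H Z(H) = \sum_\tau \frac{N_k(\tau)}{|\mathcal{C}^k(G)|}\,e^{k E(\tau)}$, and taking the $k$-th root while using the polynomial bound on the number of types reduces the right-hand side, in the limit, to $\max_\tau e^{E(\tau) + H_\mathrm{B}(\tau)} = \max_\tau Z_\mathrm{B}(G,\tau) = Z_\mathrm{B}(G)$.

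I expect the combinatorial counting to be the main obstacle, on two fronts. First, verifying that the edge-permutation count factorizes exactly as claimed requires care: one must check that, once the per-copy values and per-copy tuples are fixed, the permutations realizing them decouple across incidences precisely because the fixed type lies in $\mathcal{T}$. Second, the passage from the discrete maximum over types in $\frac{1}{k}\mathbb{Z}$ to the continuous maximum over $\mathcal{T}$ needs a density-plus-continuity argument: the optimizer of $Z_\mathrm{B}(G,\tau)$ must be approximated by achievable types as $k\to\infty$, and the $\limsup$ must be shown to match the genuine limit of the best achievable value. The error terms hidden in Stirling (which must be made uniform over the polynomially many types) and the boundary behavior of the entropy where some $\tau$-coordinates vanish are the places where the estimates demand the most attention.
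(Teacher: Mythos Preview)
Your sketch is correct, but note that the paper itself does not prove this statement: its entire proof is the one-line citation ``See Theorem~27 of \cite{pascalcounting}.'' What you have outlined---grouping pairs $(H,x)$ by the induced pseudomarginal type $\tau$, counting them via the multinomials $\binom{k}{(k\tau_i(\cdot))}$, $\binom{k}{(k\tau_\alpha(\cdot))}$ and the edge-wiring factor $\prod_{\alpha}\prod_{i\in\alpha}\prod_{x_i}(k\tau_i(x_i))!$, dividing by $|\mathcal{C}^k(G)|=(k!)^{\sum_\alpha|\alpha|}$, and extracting the Bethe free energy as the exponential rate via Stirling---is precisely the combinatorial method-of-types argument Vontobel gives in the cited reference. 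The technical points you flag (uniformity of the Stirling error over the polynomially many types, approximation of the continuous optimizer by $\tfrac{1}{k}\mathbb{Z}$-valued types, and boundary behavior of the entropy) are exactly the places that need care, and once they are handled the $\limsup$ is in fact a limit.
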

\begin{proof}
See Theorem 27 of \cite{pascalcounting}.
\end{proof}

Theorem \ref{thm:pascal} suggests that a reasonable strategy for proving that $Z_\mathrm{B}(G) \leq Z(G)$ would be to show that $Z(H) \leq Z(G)^k$ for any $k$-cover $H$ of $G$.  This is the strategy that we adopt in the remainder of this work.  

\section{The ``Four Functions'' Theorem and Related Results}

Let $z^i$ be a function that computes the $i^{th}$ largest element of a collection.  We will, abusively, denote this function as $z^i(x^1,\ldots,x^k)$ for any collection of vectors $x^1,\ldots,x^k\in\mathbb{R}^n$.  Here, $z^i(x^1,\ldots,x^k)$ is the vector whose $j^{th}$ component is the $i^{th}$ largest element of $x^1_j,\ldots,x^k_j$ for each $j\in\{1,\ldots,n\}$.  As an example, for vectors $x^1,\ldots,x^k\in\{0,1\}^n$, $z^i(x^1,\ldots,x^k)_j = \{\sum_{a=1}^k x^a_j \geq i\}$ where $\{\cdot \geq \cdot\}$ is one if the inequality is satisfied and zero otherwise.  

The ``four functions'' theorem \cite{4func} is a general result concerning nonnegative functions over distributive lattices.  Many correlation inequalities, such as the FKG inequality, can be seen as special cases of this theorem \cite{alonspencer}.  
\begin{theorem}[``Four Functions'' Theorem]
Let $f_1,f_2,f_3,f_4:\{0,1\}^n\rightarrow \mathbb{R}_{\geq 0}$ be nonnegative real-valued functions.  If for all $x, y\in\{0,1\}^n$,
\[f_1(x)f_2(y) \leq f_3(x\wedge y)f_4(x\vee y),\]
then
\[ \Big[\sum_{z\in\{0,1\}^n} f_1(z)\Big]\Big[\sum_{z\in\{0,1\}^n} f_2(z)\Big] \leq \Big[\sum_{z\in\{0,1\}^n} f_3(z)\Big]\Big[\sum_{z\in\{0,1\}^n} f_4(z)\Big].\]
\end{theorem}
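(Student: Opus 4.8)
The plan is to prove the theorem by induction on the number of variables $n$, with the single-variable case $n=1$ carrying essentially all of the content. For the base case, write $a_i = f_i(0)$ and $b_i = f_i(1)$; evaluating the hypothesis $f_1(x)f_2(y)\le f_3(x\wedge y)f_4(x\vee y)$ at the four pairs $(x,y)\in\{0,1\}^2$ unpacks it into the four scalar inequalities
\[ a_1 a_2 \le a_3 a_4,\quad b_1 b_2 \le b_3 b_4,\quad a_1 b_2 \le a_3 b_4,\quad b_1 a_2 \le a_3 b_4, \]
where the two mixed pairs both collapse to the bound $a_3 b_4$ because $0\wedge 1 = 0$ and $0\vee 1 = 1$. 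The goal is $(a_1+b_1)(a_2+b_2)\le(a_3+b_3)(a_4+b_4)$. Expanding both sides, the first two inequalities dispatch the diagonal terms $a_1 a_2$ and $b_1 b_2$, so it remains only to control the cross terms, i.e. to show $a_1 b_2 + b_1 a_2 \le a_3 b_4 + b_3 a_4$.

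This cross-term bound is the one place where a term-by-term comparison breaks down---the hypotheses only bound each of $a_1 b_2$ and $b_1 a_2$ by $a_3 b_4$, which is too weak---and I expect it to be the main obstacle. My approach is to multiply the first two inequalities to obtain $(a_1 b_2)(b_1 a_2) = (a_1 a_2)(b_1 b_2) \le (a_3 a_4)(b_3 b_4) = (a_3 b_4)(b_3 a_4)$, and then to apply the following elementary lemma: for nonnegative reals, if $u,v\le U$ and $uv\le UV$, then $u+v\le U+V$. Taking $u = a_1 b_2$, $v = b_1 a_2$, $U = a_3 b_4$, and $V = b_3 a_4$ yields exactly the required cross-term inequality. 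The lemma itself follows from a short case split: assuming $v\ge u$, the bound is immediate when $v\le V$, and when $v>V$ one combines $u\le UV/v$ with the fact that $(v-U)(v-V)\le 0$ (valid since $V<v\le U$) to conclude $u+v\le (UV+v^2)/v\le U+V$. A pleasant feature of this route is that it handles uniformly all degenerate cases in which some potential vanishes, so no separate bookkeeping for zeros is required.

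For the inductive step, I would sum out the last coordinate: given functions on $\{0,1\}^n$, define $g_i(x)=f_i(x,0)+f_i(x,1)$ on $\{0,1\}^{n-1}$, so that $\sum_z g_i(z)=\sum_z f_i(z)$ and the conclusion for the $g_i$ on $\{0,1\}^{n-1}$ is precisely the conclusion for the $f_i$ on $\{0,1\}^n$. The crucial point is that checking the four-functions hypothesis for the $g_i$ at a fixed pair $(x,y)\in\{0,1\}^{n-1}$ is itself an instance of the already-established $n=1$ statement: applying the original hypothesis to $(x,s)$ and $(y,t)$ and using $(x,s)\wedge(y,t)=(x\wedge y,\,s\wedge t)$ and $(x,s)\vee(y,t)=(x\vee y,\,s\vee t)$ shows that the single-variable functions $s\mapsto f_1(x,s)$, $s\mapsto f_2(y,s)$, $s\mapsto f_3(x\wedge y,s)$, $s\mapsto f_4(x\vee y,s)$ satisfy the $n=1$ hypothesis, hence $g_1(x)g_2(y)\le g_3(x\wedge y)g_4(x\vee y)$. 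The inductive hypothesis applied to $g_1,g_2,g_3,g_4$ then closes the argument. Thus the entire theorem rests on the scalar $n=1$ inequality, and the cross-term lemma above is its crux.
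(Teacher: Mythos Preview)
Your proof is correct; it is essentially the classical Ahlswede--Daykin argument. The paper, however, does not supply its own proof of this statement: the four functions theorem is quoted as a known result from the literature, so there is no in-paper argument to compare against. Your inductive scheme---settling the case $n=1$ via the scalar lemma ``$u,v\le U$ and $uv\le UV$ imply $u+v\le U+V$'' and then summing out one coordinate at a time---is the standard route, and it is also precisely the template the paper adopts for its own variant of the theorem (first a one-dimensional lemma, then induction on $n$ by marginalizing the last coordinate).
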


The following lemma is a direct consequence of the four functions theorem:
\begin{lemma}
If $f:\{0,1\}^n\rightarrow \mathbb{R}_{\geq 0}$ is log-supermodular, then every marginal of $f$ is also log-supermodular.\label{lem:logsup}
\end{lemma}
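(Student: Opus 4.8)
The plan is to reduce the general statement to the case of marginalizing out a single variable, and then to apply the four functions theorem directly to that single variable. By a marginal of $f$ I mean any function obtained by summing $f$ over some subset of its arguments. Since summing out a set of variables can be carried out one variable at a time, it suffices to show that summing out a single variable preserves log-supermodularity; the general case then follows by induction on the number of variables eliminated.

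So first I would fix attention on the marginal obtained by summing over the last coordinate. Define $g:\{0,1\}^{n-1}\rightarrow\mathbb{R}_{\geq 0}$ by $g(u) = \sum_{t\in\{0,1\}} f(u,t)$, and the goal is to establish $g(u)g(v)\leq g(u\wedge v)g(u\vee v)$ for all $u,v\in\{0,1\}^{n-1}$. The idea is to view this inequality itself as an instance of the four functions theorem, but now over the single-variable lattice $\{0,1\}$ indexed by the coordinate $t$ that is being summed out. Concretely, with $u$ and $v$ fixed, I would introduce the four functions $f_1(t)=f(u,t)$, $f_2(t)=f(v,t)$, $f_3(t)=f(u\wedge v,t)$, and $f_4(t)=f(u\vee v,t)$ on $\{0,1\}$.

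The one genuine verification is the hypothesis of the four functions theorem, namely that $f_1(s)f_2(t)\leq f_3(s\wedge t)f_4(s\vee t)$ for all $s,t\in\{0,1\}$. This is where log-supermodularity of the original $f$ is used: the pair of vectors $(u,s)$ and $(v,t)$ in $\{0,1\}^n$ has meet $(u\wedge v,\, s\wedge t)$ and join $(u\vee v,\, s\vee t)$, because the lattice operations act coordinatewise. Hence the required inequality is exactly $f((u,s))f((v,t))\leq f((u,s)\wedge(v,t))\,f((u,s)\vee(v,t))$, which is the defining inequality of log-supermodularity applied to $f$. With the hypothesis confirmed, the four functions theorem yields $\big[\sum_t f_1(t)\big]\big[\sum_t f_2(t)\big]\leq\big[\sum_t f_3(t)\big]\big[\sum_t f_4(t)\big]$, which is precisely $g(u)g(v)\leq g(u\wedge v)g(u\vee v)$.

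I do not expect a serious obstacle here: the content of the lemma is really the observation that the four functions hypothesis over the summed-out coordinate is a restatement of log-supermodularity of $f$ on the extended vectors, so the whole argument is a clean reduction rather than a computation. The only point requiring a little care is the inductive bookkeeping for summing out several variables, and the observation that relabeling coordinates lets me always treat the variable being eliminated as the last one, so that a single-variable argument suffices at each step.
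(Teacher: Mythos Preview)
Your proof is correct and is exactly the standard way to flesh out the paper's claim that the lemma is ``a direct consequence of the four functions theorem''; the paper gives no further argument, so you have simply supplied the omitted details. The single-variable reduction together with the coordinatewise identity $(u,s)\wedge(v,t)=(u\wedge v,\,s\wedge t)$ (and similarly for $\vee$) is precisely what makes the hypothesis of the four functions theorem immediate from the log-supermodularity of $f$.
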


The four functions theorem can be generalized to more than four functions, and a special case of the more general ``2k functions'' theorem is as follows \cite{aharoni, rinott, rinott2}:
\begin{theorem}[``2k Functions'' Theorem]
Let $f_1,\ldots,f_k:\{0,1\}^n\rightarrow \mathbb{R}_{\geq 0}$ and $g_1,\ldots,g_k:\{0,1\}^{n}\rightarrow \mathbb{R}_{\geq 0}$ be nonnegative real-valued functions.  If for all $x^1,\ldots,x^k\in\{0,1\}^n$,
\begin{align}
\prod_{i=1}^k g_i(x^i) \leq \prod_{i=1}^k f_i(z^i(x^1,\ldots,x^k)),\label{eqn:2kfnc}
\end{align}
then
\[ \prod_{i=1}^k \Big[\sum_{x\in\{0,1\}^n} g_i(x)\Big] \leq \prod_{i=1}^k \Big[\sum_{x\in\{0,1\}^n} f_i(x)\Big].\]
\label{thm:rinott}
\end{theorem}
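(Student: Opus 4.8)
The plan is to prove Theorem~\ref{thm:rinott} by induction on the dimension $n$, with the single-variable case $n=1$ serving as the combinatorial heart of the argument. The inductive step is a marginalization. Given the functions on $\{0,1\}^n$, I would sum out the last coordinate to obtain $\hat{g}_i,\hat{f}_i:\{0,1\}^{n-1}\rightarrow\mathbb{R}_{\geq 0}$ defined by $\hat{g}_i(x') = g_i(x',0) + g_i(x',1)$ and $\hat{f}_i(y') = f_i(y',0) + f_i(y',1)$, where $(x',t)$ denotes appending $t\in\{0,1\}$ as the $n$-th coordinate. Since $\sum_{x\in\{0,1\}^n} g_i(x) = \sum_{x'} \hat{g}_i(x')$ and likewise for $f_i$, the conclusion of the theorem for $\hat{g},\hat{f}$ in dimension $n-1$ is literally the conclusion for $g,f$ in dimension $n$. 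So it suffices to check that the marginalized functions still satisfy hypothesis~\eqref{eqn:2kfnc} and then invoke the inductive hypothesis in dimension $n-1$.

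Verifying the hypothesis for $\hat{g},\hat{f}$ is exactly where the $n=1$ case is used. The key structural fact is that the order statistic $z^i$ acts coordinatewise, so appending a coordinate commutes with sorting: $z^i((x'^1,t^1),\dots,(x'^k,t^k)) = (z^i(x'^1,\dots,x'^k),\, z^i(t^1,\dots,t^k))$. Fixing $x'^1,\dots,x'^k\in\{0,1\}^{n-1}$, writing $y'^i = z^i(x'^1,\dots,x'^k)$, and defining single-variable functions $G_i(t) = g_i(x'^i,t)$ and $F_i(s) = f_i(y'^i,s)$, the original hypothesis ranging over the last coordinates $t^1,\dots,t^k$ becomes precisely $\prod_i G_i(t^i) \leq \prod_i F_i(z^i(t^1,\dots,t^k))$ for all $t\in\{0,1\}^k$. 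This is the hypothesis of the $n=1$ statement for $G,F$; its conclusion $\prod_i(G_i(0)+G_i(1)) \leq \prod_i(F_i(0)+F_i(1))$ is exactly $\prod_i \hat{g}_i(x'^i) \leq \prod_i \hat{f}_i(z^i(x'^1,\dots,x'^k))$, i.e.\ hypothesis~\eqref{eqn:2kfnc} for $\hat{g},\hat{f}$. Thus the whole theorem reduces to the single-variable case.

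It remains to establish the $n=1$ case, which I expect to be the main obstacle. Writing $a_i = g_i(0)$, $b_i = g_i(1)$, $c_i = f_i(0)$, $d_i = f_i(1)$ and observing that a binary tuple with $m$ ones produces a sorted vector whose first $m$ entries are $1$, the hypothesis unpacks to: for every $S\subseteq\{1,\dots,k\}$ with $|S|=m$,
\[\prod_{i\in S} b_i \prod_{i\notin S} a_i \;\leq\; \prod_{i=1}^m d_i \prod_{i=m+1}^k c_i,\]
while the goal is $\prod_{i=1}^k(a_i+b_i) \leq \prod_{i=1}^k(c_i+d_i)$. The difficulty is that expanding both products yields sums over all subsets, whereas the hypothesis at a fixed size $m$ only bounds all $\binom{k}{m}$ left-hand terms against the \emph{single} right-hand term indexed by $\{1,\dots,m\}$; a termwise comparison genuinely fails, so the slack in the non-maximal right-hand terms must be made to absorb the non-maximal left-hand terms, possibly across different values of $m$. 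I would attack this by induction on $k$: the case $k=2$ is exactly the four functions theorem in dimension one (with $g_1,g_2$ in the role of the two left functions and $f_1,f_2$, attached to join and meet respectively, in the role of the two right functions), and the inductive step would peel off one pair of functions and use the four functions theorem to merge their contributions, deriving the $(k-1)$-function hypothesis on suitably combined data. This merging, which must respect the sorted structure of the right-hand side, is the technical core; a careful execution is carried out in \cite{aharoni, rinott, rinott2}, and I would follow that route.
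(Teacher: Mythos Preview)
The paper does not prove this theorem; it is quoted from the literature with citations to \cite{aharoni, rinott, rinott2} and serves only as background for the paper's own variant. There is therefore no ``paper's proof'' to compare against.

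That said, your plan is the standard one and is structurally identical to what the paper \emph{does} carry out for its variant, Theorem~\ref{thm:kfunc}: induction on $n$, with the last coordinate summed out and the coordinatewise action of $z^i$ used to reduce verifying hypothesis~\eqref{eqn:2kfnc} for the marginals to an application of the $n=1$ case. Your reduction step matches the paper's proof of Theorem~\ref{thm:kfunc} essentially line for line. The only divergence is in the base case $n=1$: you propose induction on $k$ with the one-dimensional four functions theorem as the $k=2$ anchor (and correctly defer the merging step to \cite{aharoni, rinott, rinott2}), whereas the paper's Lemma~\ref{lem:bcases} treats its $n=1$ case by a weak-majorization argument specific to the log-supermodular setting and does not proceed by induction on $k$. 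Since the paper never actually proves Theorem~\ref{thm:rinott}, your deferral to the cited references for the technical core of the $n=1$ step is exactly what the paper itself does.
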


\subsection{A Variant of the ``Four Functions'' Theorem}
\label{sec:kfunc}
A natural generalization of Theorem \ref{thm:rinott} would be to replace the product of functions on the left-hand side of Equation \ref{eqn:2kfnc} with an arbitrary function over $x^1,\ldots,x^k$.  While the conclusion of the theorem may not continue to hold for arbitrary choices of such a function, we will show that we can replace this product with an arbitrary log-supermodular function while preserving the conclusion of the theorem.  The key property of log-supermodular functions that makes this possible is the following lemma:

\begin{lemma}
If $g:\{0,1\}^n\rightarrow \mathbb{R}_{\geq 0}$ is log-supermodular, then for any integer $k\geq 1$ and $x^1,\ldots,x^k\in\{0,1\}^n$,
\[\prod_{i=1}^k g(x^i) \leq \prod_{i=1}^k g(z^i(x^1,\ldots,x^k)).\]
\label{lem:prodlog}
\end{lemma}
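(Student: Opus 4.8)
The plan is to reduce the general statement to the case $k=2$, which is precisely the defining inequality of log-supermodularity. Indeed, for two vectors $z^1(x^1,x^2) = x^1\vee x^2$ and $z^2(x^1,x^2) = x^1\wedge x^2$, so the asserted inequality reads $g(x^1)g(x^2)\le g(x^1\vee x^2)g(x^1\wedge x^2)$, which is exactly the hypothesis. The remaining task is to show that the general-$k$ inequality is obtained by composing many instances of this two-variable inequality.

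To this end, for indices $a<b$ I would introduce the pairwise sorting operation $T_{a,b}$ that takes a tuple $(y^1,\dots,y^k)$ and replaces the pair $(y^a,y^b)$ by $(y^a\vee y^b,\,y^a\wedge y^b)$, leaving the other entries unchanged. Two facts drive the argument. First, log-supermodularity gives $g(y^a)g(y^b)\le g(y^a\vee y^b)g(y^a\wedge y^b)$, so applying $T_{a,b}$ never decreases $\prod_{i=1}^k g(y^i)$. Second, in each coordinate $j$ the operation merely sorts the two bits $y^a_j,y^b_j$ into decreasing order, so it preserves, coordinatewise, the multiset $\{y^1_j,\dots,y^k_j\}$. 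Consequently, the product $\prod_i g$ is monotone along any sequence of such operations, while the coordinatewise multisets are invariant.

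It then remains to argue that some finite sequence of these operations drives an arbitrary starting tuple to the fully sorted chain $z^1\ge z^2\ge\cdots\ge z^k$. I would call an application of $T_{a,b}$ \emph{productive} if it actually changes the tuple, i.e.\ if some coordinate has $y^a_j=0$ and $y^b_j=1$, and track the monovariant $\Phi=\sum_{a=1}^k c_a\,|y^a|$, where $|y|=\sum_j y_j$ and $c_1>\cdots>c_k$ is a fixed strictly decreasing weight sequence. Because $|y^a\vee y^b|+|y^a\wedge y^b|=|y^a|+|y^b|$ while $|y^a\vee y^b|\ge|y^a|$, the change in $\Phi$ under $T_{a,b}$ equals $(c_a-c_b)\big(|y^a\vee y^b|-|y^a|\big)$, which is strictly positive on every productive operation; since $\Phi$ takes finitely many values, only finitely many productive operations can be performed. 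A tuple admitting no productive operation satisfies $y^a\ge y^b$ for all $a<b$, i.e.\ it is a chain, and the coordinatewise multiset invariance forces this terminal chain to equal exactly $(z^1,\dots,z^k)$. Chaining the inequalities $\prod_i g(x^i)\le\cdots\le\prod_i g(z^i)$ along this sorting sequence then yields the lemma.

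The main obstacle is this termination-and-identification step: one must verify that the greedy sorting process cannot stall at a non-sorted configuration and that its unique terminal state is the prescribed chain $z$. The monovariant $\Phi$ handles termination and the coordinatewise multiset invariant pins down the limit, but I would take care to confirm that ``no productive operation is available'' is genuinely equivalent to the chain condition $y^1\ge\cdots\ge y^k$, and that distinct orderings of the operations all lead to the same $z$ (which is immediate once the multiset invariant is established). Everything else is a direct appeal to the two-variable case.
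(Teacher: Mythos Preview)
Your proposal is correct and is precisely the natural way to unpack the paper's one-line proof (``This follows directly from the log-supermodularity of $g$''): repeated application of the two-variable case via pairwise $\vee/\wedge$ sorting, together with the coordinatewise multiset invariant to identify the terminal chain with $(z^1,\dots,z^k)$. The monovariant $\Phi$ you introduce is a clean way to certify termination; the paper simply leaves all of this implicit.
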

\begin{proof}
This follows directly from the log-supermodularity of $g$.
\end{proof}

The proof of our variant of the ``$2k$ functions theorem'' uses the properties of weak majorizations:

\begin{definition}
A vector $x \in\mathbb{R}^n$ is \textbf{weakly majorized} by a vector $y\in \mathbb{R}^n$, denoted $x\prec_w y$, if
\[\sum_{i=1}^t z^i(x_1,\ldots,x_n) \leq \sum_{i=1}^t z^i(y_1,\ldots,y_n)\]
for all $t\in\{1,\ldots,n\}$.  
\end{definition}

For the purposes of this paper, we will only need the following result concerning weak majorizations:
\begin{theorem}
For $x,y\in\mathbb{R}^n$, $x\prec_w y$ if and only if 
\[\sum_{i=1}^n g(x_i) \leq \sum_{i=1}^n g(y_i)\]
for all continuous increasing convex functions $g:\mathbb{R}\rightarrow\mathbb{R}$.\label{thm:weaksum}
\end{theorem}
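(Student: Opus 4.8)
The plan is to prove both implications after first reducing to the case where the entries of $x$ and $y$ are already sorted in decreasing order. Since the relation $x\prec_w y$ and the quantity $\sum_{i=1}^n g(x_i)$ are both invariant under permutations of the coordinates, I may assume without loss of generality that $x_1\geq x_2\geq\cdots\geq x_n$ and $y_1\geq y_2\geq\cdots\geq y_n$, so that $z^i(x)=x_i$ and $z^i(y)=y_i$. Writing $S^x_t=\sum_{i=1}^t x_i$ and $S^y_t=\sum_{i=1}^t y_i$, the hypothesis $x\prec_w y$ then reads simply $S^x_t\leq S^y_t$ for every $t\in\{1,\ldots,n\}$.

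For the forward direction, assume $x\prec_w y$ and let $g$ be any continuous increasing convex function. The key idea is a summation-by-parts argument combined with the subgradient inequality for convex functions. Let $g'_+$ denote the right-hand derivative of $g$; because $g$ is convex and increasing, $g'_+$ is non-decreasing and non-negative, and convexity yields $g(y_i)-g(x_i)\geq g'_+(x_i)(y_i-x_i)$ for each $i$. Setting $a_i=g'_+(x_i)$ and $b_i=y_i-x_i$, the ordering $x_1\geq\cdots\geq x_n$ forces $a_1\geq a_2\geq\cdots\geq a_n\geq 0$, while the partial sums $B_t=\sum_{i=1}^t b_i=S^y_t-S^x_t$ are all non-negative. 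Abel summation then gives
\[\sum_{i=1}^n a_i b_i = \sum_{t=1}^{n-1}(a_t-a_{t+1})B_t + a_n B_n \geq 0,\]
since every term on the right is a product of non-negative factors. Combining this with the subgradient inequality produces $\sum_i g(y_i)-\sum_i g(x_i)\geq\sum_i a_i b_i\geq 0$, which is the desired conclusion.

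For the reverse direction, I would exploit a family of explicit test functions. For each threshold $c\in\mathbb{R}$, the function $s\mapsto(s-c)^+=\max(s-c,0)$ is continuous, convex, and non-decreasing, and one has the representation
\[\sum_{i=1}^t z^i(x) = \min_{c\in\mathbb{R}}\Big[\,tc + \sum_{i=1}^n (x_i-c)^+\Big],\]
with the minimum attained at $c=z^t(x)$. Applying the hypothesis with $g=(\cdot-c)^+$ (or, if one insists on strict monotonicity, with the perturbation $g_\epsilon(s)=(s-c)^++\epsilon s$ and letting $\epsilon\to 0$) gives $\sum_i(x_i-c)^+\leq\sum_i(y_i-c)^+$ for every $c$. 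Adding $tc$ to both sides and minimizing over $c$, which preserves the inequality because it holds pointwise in $c$, yields $\sum_{i=1}^t z^i(x)\leq\sum_{i=1}^t z^i(y)$ for each $t$, which is precisely $x\prec_w y$.

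The main obstacle is the rigorous treatment of the subgradient in the forward direction: for a general, possibly non-differentiable, convex $g$ one must verify that the right-derivative $g'_+$ is a legitimate, monotone, non-negative choice of subgradient, so that the sign bookkeeping in the Abel summation goes through unchanged. The reverse direction is comparatively routine once one identifies the correct test functions and checks the min-representation of the partial sums, with only the minor point of promoting non-decreasing test functions to strictly increasing ones via a vanishing linear perturbation.
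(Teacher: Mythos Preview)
Your argument is correct in both directions: the Abel-summation step with the right derivative as subgradient handles the forward implication cleanly, and the positive-part test functions together with the variational formula $\sum_{i=1}^t z^i(x)=\min_c\bigl[tc+\sum_i(x_i-c)^+\bigr]$ give the converse; the $\epsilon$-perturbation to upgrade ``non-decreasing'' to ``increasing'' is the right fix. The paper, however, does not prove this theorem at all---it simply cites items 3.C.1.b and 4.B.2 of Marshall and Olkin's monograph on majorization---so your write-up supplies a self-contained proof where the paper defers to the literature. Your argument is in fact the standard one found in that reference, so there is no substantive methodological divergence, only a difference in exposition: the paper treats the result as a known black box, while you unpack it.
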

\begin{proof}
See 3.C.1.b and 4.B.2 of \cite{marshall}.
\end{proof}

We now state and prove our variant of the $2k$ functions theorem in two pieces.  First, we consider the case where $n=1$: 

\begin{lemma}
Let $f_1,\ldots,f_k:\{0,1\}\rightarrow \mathbb{R}_{\geq 0}$ and $g:\{0,1\}^{k}\rightarrow \mathbb{R}_{\geq 0}$  be nonnegative real-valued functions such that $g$ is log-supermodular.  If for all $x^1,\ldots,x^k\in\{0,1\}$,
\[g(x^1,\ldots,x^k) \leq \prod_{i=1}^k f_i(z^i(x^1,\ldots,x^k)),\]
then
\[ \sum_{x^1,\ldots,x^k} g(x^1,\ldots,x^k) \leq \prod_{i=1}^k \Big[\sum_{x\in\{0,1\}} f_i(x)\Big].\]
\label{lem:bcases}
\end{lemma}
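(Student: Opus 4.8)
The plan is to expand the right-hand side as $\prod_{i=1}^k\big(f_i(0)+f_i(1)\big)=\sum_{y\in\{0,1\}^k}G(y)$, where I write $G(y):=\prod_i f_i(y_i)$, and to note that the hypothesis bounds $g$ by $G$ evaluated at the sorted vector: if $x$ has exactly $m$ ones then $z^i(x)=1$ for $i\le m$ and $0$ otherwise, so $g(x)\le G(\mathbf 1_m)$, where $\mathbf 1_m$ denotes the vector whose first $m$ coordinates are $1$. Grouping both $\sum_x g(x)$ and $\sum_y G(y)$ by Hamming weight $m$, it suffices to prove the within-level inequality $\sum_{|x|=m}g(x)\le\sum_{|y|=m}G(y)$ for each $m$, since summing over $m$ gives the lemma. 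A naive term-by-term bound fails here: although $g(x)\le G(\mathbf 1_m)$ holds for every weight-$m$ vector, $\binom km G(\mathbf 1_m)$ can exceed $\sum_{|y|=m}G(y)$, so log-supermodularity must be used to reallocate mass within a level.

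For a fixed weight $m$, let $S_m$ be the set of weight-$m$ vectors. I would establish the within-level inequality via Theorem \ref{thm:weaksum} by showing that $(\log g(x))_{x\in S_m}$ is weakly majorized by $(\log G(y))_{y\in S_m}$ and then applying the theorem to the continuous, increasing, convex function $\exp$, which converts the weak majorization into $\sum_{x\in S_m}g(x)\le\sum_{y\in S_m}G(y)$ (vanishing values of $g$ are handled by a routine limiting argument). Since the sum of the $t$ largest entries of $(\log g(x))_{x\in S_m}$ equals $\log\max_U\prod_{x\in U}g(x)$, with $U$ ranging over $t$-element subsets of $S_m$, the weak majorization reduces to the family of inequalities
\[\max_{U\subseteq S_m,\,|U|=t}\ \prod_{x\in U}g(x)\ \le\ \max_{W\subseteq S_m,\,|W|=t}\ \prod_{y\in W}G(y),\qquad t=1,\dots,\binom km.\]

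To prove these, I would fix a maximizing $U=\{x^1,\dots,x^t\}$ and invoke Lemma \ref{lem:prodlog}: the coordinate-wise sorted chain $z^1\ge\cdots\ge z^t$ satisfies $\prod_s g(x^s)\le\prod_s g(z^s)$. Applying the hypothesis to each $z^s$ gives $\prod_s g(x^s)\le\prod_s G(\mathbf 1_{m_s})$ with $m_s:=|z^s|$. It then remains to dominate this product by $\prod_s G(Y^s)$ for some $t$ \emph{distinct} weight-$m$ vectors $Y^1,\dots,Y^t$. Here the key structural facts are that $G$ is modular, so $\log\prod_s G(Y^s)$ depends on the $Y^s$ only through their column sums $d_l=\sum_s(Y^s)_l$, and that coordinate-wise sorting preserves column sums, so the chain has the same column sums as the original distinct family $\{x^s\}$. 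Since permuting coordinates realizes every rearrangement of a column-sum vector by a distinct weight-$m$ family, the rearrangement inequality lets me align the largest column sums with the largest values of $w_i:=\log\big(f_i(1)/f_i(0)\big)$; this is exactly the arrangement maximizing $\sum_l w_l d_l$, and it therefore dominates the value produced by the chain, closing the top-$t$ inequality.

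The main obstacle is this last step. Coordinate-wise sorting raises the product of $g$ but pushes the chain out of level $m$ (the $m_s$ need not equal $m$), so the sorted configuration is not directly comparable to a weight-$m$ family. The crux is to return to level $m$ without decreasing $\prod G$, which is where I expect the interplay of the modularity of $G$ (its value depends only on column sums), the invariance of column sums under sorting, and the rearrangement inequality to be essential. The delicate bookkeeping point is ensuring the final family $\{Y^s\}$ can be taken with distinct entries — this distinctness should be inherited from the distinctness of the original $t$-subset $U$ via a coordinate permutation, which preserves both size and distinctness.
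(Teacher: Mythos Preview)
Your proposal is correct and follows essentially the same route as the paper: reduce to a level-by-level weak-majorization statement via Theorem~\ref{thm:weaksum}, and for each top-$t$ product invoke Lemma~\ref{lem:prodlog} to pass to the sorted chain before applying the hypothesis. The paper's closing step is a little more direct than your rearrangement-inequality argument: rather than maximizing over all coordinate permutations, it simply takes the specific permutation $\sigma$ that sorts the row sums $d_i$ in decreasing order. For that $\sigma$ the chain $\{z^s(\sigma\cdot x^1,\ldots,\sigma\cdot x^t)\}$ is already monotone within each vector, so the within-vector sort in the hypothesis does nothing and one obtains the \emph{equality} $\prod_s G(\mathbf 1_{m_s})=\prod_s G(\sigma\cdot x^s)$. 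Your rearrangement framing recovers only the (sufficient) inequality, and in fact you do not need the rearrangement inequality at all: since the column sums $(e_i)$ of the family $\{\mathbf 1_{m_s}\}$ are exactly the decreasing rearrangement of $(d_i)$, the chain's value is already realized by one particular permutation of the original family. Consequently the ``main obstacle'' you flag in your final paragraph is already resolved by your own construction, and the distinct weight-$m$ witnesses $Y^s$ are just $\sigma\cdot x^s$.
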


\begin{proof}
Let $G\in\mathbb{R}^{2^k}$ be the vector whose $2^k$ elements correspond to the $2^k$ distinct evaluations of $g$.  Similarly, let $F\in\mathbb{R}^{2^k}$ be the vector whose $2^k$ elements correspond to the $2^k$ distinct evaluations of $f(x^1,\ldots,x^k) \triangleq \prod_{i=1}^k f_i(x^i)$.  Let $\log G \triangleq (\log G_1,\ldots, \log G_{2^k})$ and  $\log F \triangleq (\log F_1,\ldots, \log F_{2^k})$.  Our strategy will be to show that $\log G \prec_w \log F$.  Then, by Theorem \ref{thm:weaksum} and the fact that $2^x$ is convex and increasing, we will have
\begin{align*}
\sum_{x^1,\ldots,x^k} g(x^1,\ldots,x^k) &= \sum_{i=1}^{2^k} 2^{\log G_i}
\leq \sum_{i=1}^{2^k} 2^{\log F_i}
= \sum_{x^1,\ldots,x^k}\prod_{i=1}^k f_i(x^i)
\end{align*}
as desired.  We note that, by continuity arguments, this analysis holds even when some values of $g$ and $f$ are equal to zero.  Further, let $G^c\in\mathbb{R}^{k \choose c}$ be the vector obtained from $G$ by only considering assignments with exactly $c$ nonzero elements (i.e., $x^1+\ldots+x^k = c$), and define $F^c$ similarly for $F$.  If we can show that 
\[\prod_{m=1}^M z^m(G^c_1,\ldots,G^c_{k\choose c}) \leq \prod_{m=1}^M z^m(F^c_1,\ldots,F^c_{k\choose c})\]
for all $c\in\{0,\ldots,k\}$ and $M \leq {k \choose c}$, then we must have that $\log G \prec_w \log F$.  

Now, fix $c\in\{0,\ldots,k\}$, $T\in\{1,\ldots, {k \choose c}\}$, and let $V^c = \{ v\in\{0,1\}^k\hspace{.1cm}|\hspace{.1cm} v_1+\ldots+v_k = c\}$.  Suppose $v^1,\ldots, v^T\in V^c$ are $T$ distinct vectors.  By Lemma \ref{lem:prodlog}, we must have
\begin{align*}
\prod_{t=1}^T g(v^t)  &\leq \prod_{t=1}^T g(z^t(v^1,\ldots ,v^T))
\leq \prod_{t=1}^T f(w^t)
\end{align*}
where $w^t_j = z^j(z^t(v^1,\ldots,v^T)_1,\ldots,z^t(v^1,\ldots,v^T)_k)$ for each $j\in\{1,\ldots,k\}$.  Given any such $v^1,\ldots, v^T\in V^c$, we will show how to construct distinct vectors $\overline{v}^1,\ldots,\overline{v}^T\in V^c$ such that $\prod_{t=1}^T f(w^t) \leq \prod_{t=1}^T f(\overline{v}^t)$.  Consequently, we will have
\[\prod_{t=1}^T g(v^t)  \leq \prod_{t=1}^T f(\overline{v}^t) \leq \prod_{m=1}^T z^m(F^c_1,\ldots,F^c_{k\choose c}).\]
As our construction will work for any choice of distinct vectors $v^1,\ldots,v^T\in V^c$, it will work, in particular, for the $T$ distinct vectors in $V^c$ that maximize $\prod_{t=1}^T g(v^t)$, and the lemma will then follow as a consequence of our previous arguments. 

We now describe how to construct the vectors $\overline{v}^1,\ldots,\overline{v}^T$ from the vectors $v^1,\ldots, v^T$.  Let $A\in\mathbb{R}^{k\times t}$ be the matrix whose $i^{th}$ column is given by the vector $v^i$.  Construct $\overline{A}\in\mathbb{R}^{k\times t}$ from $A$ by swapping the rows of $A$ so that for each $i < j\in\{1,\ldots,k\}, \sum_p \overline{A}_{ip} \geq \sum_p \overline{A}_{jp}$.  Intuitively, the first row of $\overline{A}$ corresponds to the row of $A$ with the most nonzero elements, the second row of $\overline{A}$ corresponds to the row of $A$ with the second largest number of nonzero elements, and so on. Let $\overline{v}^1,\ldots,\overline{v}^T$ be the columns of $\overline{A}$.  Notice that $\overline{v}^1,\ldots,\overline{v}^T$ are distinct vectors in $V^c$ and that, by construction, $z^j(z^t(\overline{v}^1,\ldots,\overline{v}^T)_1,\ldots,z^t(\overline{v}^1,\ldots,\overline{v}^T)_k) = z^t(\overline{v}^1,\ldots,\overline{v}^T)_j$ for each $j\in\{1,\ldots,k\}$ and $t\in\{1,\ldots, T\}$. Therefore, we must have
\begin{align*}
\prod_{t=1}^T g(\overline{v}^t)  &\leq \prod_{t=1}^T g(z^t(\overline{v}^1,\ldots,\overline{v}^T))
\leq \prod_{t=1}^T f(z^t(\overline{v}^1,\ldots,\overline{v}^T))
=  \prod_{t=1}^T f(\overline{v}^t)
\end{align*}
where the equality follows from the definition of $f$ as a product of the $f_i$.  In addition, the vector $z^t(v^1,\ldots,v^T)$ is simply a permuted version of the vector $z^t(\overline{v}^1,\ldots,\overline{v}^T)$ which means that their $j^{th}$ largest elements must agree:
\begin{align*}
w^t_j &= z^j(z^t(v^1,\ldots,v^T)_1,\ldots,z^t(v^1,\ldots,v^T)_k)\\
&= z^j(z^t(\overline{v}^1,\ldots,\overline{v}^T)_1,\ldots,z^t(\overline{v}^1,\ldots,\overline{v}^T)_k)\\
&= z^t(\overline{v}^1,\ldots,\overline{v}^T)_j.
\end{align*}
Therefore,
\begin{align*}
\prod_{t=1}^T g(v^t) &\leq \prod_{t=1}^T f(w^t)
= \prod_{t=1}^T f(z^t(\overline{v}^1,\ldots,\overline{v}^T))
= \prod_{t=1}^T f(\overline{v}^t)
\end{align*}
and the lemma follows as a consequence .
\end{proof}

In the case that $n=1$ and $k\geq 1$, this lemma is a more general result than the $2k$ functions theorem:  if $g(x^1,\ldots,x^k) = \prod_i g_i(x^i)$ for $g_1,\ldots,g_k:\{0,1\}\rightarrow\mathbb{R}_{\geq 0}$, then $g$ is log-supermodular.  As in the proof of the 2k functions theorem, the general theorem for $n\geq 1$ follows by induction on $n$:
\begin{theorem}
Let $f_1,\ldots,f_k:\{0,1\}^n\rightarrow \mathbb{R}_{\geq 0}$ and $g:\{0,1\}^{kn}\rightarrow \mathbb{R}_{\geq 0}$  be nonnegative real-valued functions such that $g$ is log-supermodular.  If for all $x^1,\ldots,x^k\in\{0,1\}^n$,
\[g(x^1,\ldots,x^k) \leq \prod_{i=1}^k f_i(z^i(x^1,\ldots,x^k)),\]
then
\[ \sum_{x^1,\ldots,x^k} g(x^1,\ldots,x^k) \leq \prod_{i=1}^k \Big[\sum_{x\in\{0,1\}^n} f_i(x)\Big].\]\label{thm:kfunc}
\end{theorem}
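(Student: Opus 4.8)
The plan is to induct on $n$, reducing the $n$-coordinate statement to the $(n-1)$-coordinate case together with the $n=1$ base case, which is exactly Lemma~\ref{lem:bcases}. Write each $x^i\in\{0,1\}^n$ as $x^i=(u^i,y^i)$ with $u^i\in\{0,1\}^{n-1}$ the first $n-1$ coordinates and $y^i\in\{0,1\}$ the last. The structural fact I would lean on is that $z^i$ acts coordinatewise, so splitting off the last coordinate commutes with sorting: $z^i(x^1,\ldots,x^k)=\big(z^i(u^1,\ldots,u^k),\,z^i(y^1,\ldots,y^k)\big)$ for every $i$. This decoupling is what lets the two levels of the induction talk to each other.

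First I would define a function of the $k$ last coordinates by marginalizing out the tails, $\tilde g(y^1,\ldots,y^k)=\sum_{u^1,\ldots,u^k} g\big((u^1,y^1),\ldots,(u^k,y^k)\big)$, together with the one-dimensional functions $\tilde f_i(b)=\sum_{u\in\{0,1\}^{n-1}} f_i(u,b)$. By construction $\sum_{y^1,\ldots,y^k}\tilde g=\sum_{x^1,\ldots,x^k}g$ and $\sum_{b}\tilde f_i(b)=\sum_{x}f_i(x)$, so it suffices to verify the hypotheses of Lemma~\ref{lem:bcases} for $\tilde g$ and $\tilde f_1,\ldots,\tilde f_k$; the lemma then produces precisely the desired inequality. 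That $\tilde g$ is log-supermodular is immediate from Lemma~\ref{lem:logsup}, since $\tilde g$ is a marginal of the log-supermodular function $g$.

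The main work, and the step I expect to be the crux, is the pointwise domination $\tilde g(y^1,\ldots,y^k)\le \prod_{i=1}^k \tilde f_i\big(z^i(y^1,\ldots,y^k)\big)$. To establish it I would fix $y^1,\ldots,y^k$, set $b_i=z^i(y^1,\ldots,y^k)$ (which depends only on the $y$'s, equaling $1$ exactly when at least $i$ of them are $1$), and view $h(u^1,\ldots,u^k)=g\big((u^1,y^1),\ldots,(u^k,y^k)\big)$ as a function of the tails alone. Since $h$ is a restriction of $g$ obtained by fixing coordinates, it remains log-supermodular, and the coordinatewise splitting above rewrites the hypothesis of Theorem~\ref{thm:kfunc} as $h(u^1,\ldots,u^k)\le \prod_i f_i\big(z^i(u^1,\ldots,u^k),\,b_i\big)$. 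Defining $\hat f_i(u)=f_i(u,b_i)$, this is exactly the hypothesis of the $(n-1)$-coordinate case applied to $h$ and $\hat f_1,\ldots,\hat f_k$, so the inductive hypothesis gives $\sum_{u^1,\ldots,u^k} h \le \prod_i \sum_u \hat f_i(u)=\prod_i \tilde f_i(b_i)$, which is precisely the claimed bound on $\tilde g$. The only delicate bookkeeping is checking that the restriction stays log-supermodular and that each $b_i$ is genuinely independent of the summation variables $u^1,\ldots,u^k$; granting these, the induction closes and the theorem follows.
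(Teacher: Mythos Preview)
Your proof is correct and follows essentially the same inductive scheme as the paper: induct on $n$, use Lemma~\ref{lem:bcases} as the base case, split off one coordinate, and exploit that $z^i$ acts coordinatewise together with Lemma~\ref{lem:logsup} to re-establish the pointwise hypothesis for the reduced problem. The only cosmetic difference is the order in which the two ingredients are invoked: the paper first marginalizes out the single last coordinate (using Lemma~\ref{lem:bcases} to verify the hypothesis on the $(n-1)$-dimensional piece) and then applies the inductive hypothesis, whereas you first marginalize out the $n-1$ leading coordinates (using the inductive hypothesis to verify the one-dimensional hypothesis) and then apply Lemma~\ref{lem:bcases}.
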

\begin{proof}
We will prove the result for general $k$ and $n$ by induction on $n$. The base case of $n=1$ follows from Lemma \ref{lem:bcases}.  Now, for $n\geq 2$, suppose that the result holds for $k \geq 1$ and $n - 1$, and let $f_1,\ldots,f_k:\{0,1\}^n\rightarrow \mathbb{R}_{\geq 0}$ and $g:\{0,1\}^{kn}\rightarrow \mathbb{R}_{\geq 0}$ be nonnegative real-valued functions such that $g$ is log-supermodular. 

Define $f': \{0,1\}^{n-1}\rightarrow \mathbb{R}_{\geq 0}$ and $g':\{0,1\}^{k(n-1)}\rightarrow \mathbb{R}_{\geq 0}$ as 
\begin{align*}
f'_i(y) &= f_i(y,0) + f_i(y, 1)\\
g'(y^1,\ldots, y^k) &= \sum_{s^1,\ldots,s^k\in\{0,1\}} g(y^1, s^1,\ldots, y^k, s^k)
\end{align*}
Notice that $g'$ is log-supermodular because it is the marginal of a log-supermodular function (see Lemma \ref{lem:logsup}).  If we can show that
\begin{align*}
g'(y^1,\ldots,y^k) \leq \prod_{i=1}^k f'_i(z^i(y^1,\ldots,y^{k}))
\end{align*}
for all $y^1,\ldots,y^k\in\{0,1\}^{n-1}$, then the result will follow by induction on $n$.  To show this, fix $\overline{y}^1,\ldots,\overline{y}^k\in\{0,1\}^{n-1}$ and define $\overline{f}: \{0,1\}\rightarrow \mathbb{R}_{\geq 0}$ and $\overline{g}:\{0,1\}^{k}\rightarrow \mathbb{R}_{\geq 0}$ as 
\begin{align*}
\overline{f}_i(s) &= f_i(z^i(\overline{y}^1,\ldots,\overline{y}^k), s)\\
\overline{g}(s^1,\ldots,s^k) &= g(\overline{y}^1,s^1,\ldots,\overline{y}^k,s^k)
\end{align*}

We can easily check that $\overline{g}(s^1,\ldots ,s^k)$ is log-supermodular and that $\overline{g}(s^1,\ldots,s^k) \leq \prod_{i=1}^k \overline{f}_i(z^i(s^1,\ldots,s^k))$ for all $s^1,\ldots, s^k\in\{0,1\}$.  Hence, by Lemma \ref{lem:bcases},
\begin{align*}
g'(\overline{y}^1, \ldots, \overline{y}^k) &= \sum_{s^1,\ldots,s^k} \overline{g}(s^1,\ldots,s^k)
\leq \prod_{i=1}^k \sum_{s\in\{0,1\}} \overline{f}_i(s)
= \prod_{i=1}^k f'_i(z^i(\overline{y}^1,\ldots,\overline{y}^k))
\end{align*}
which completes the proof of the theorem.
\end{proof}

\section{Graph Covers and the Partition Function}
In this section,  we show how to apply Theorem \ref{thm:kfunc} in order to resolve Conjecture \ref{conj:bethecover}.  In addition, we show that the theorem can be applied, more generally, to yield similar results for a class of functions that can be converted into a log-supermodular functions by a change of variables.

\subsection{Log-supermodularity and Graph Covers}
\label{sec:part}
The following theorem follows easily from Theorem \ref{thm:kfunc}:

\begin{theorem}
If $f^G:\{0,1\}^n\rightarrow \mathbb{R}_{\geq 0}$ admits a log-supermodular factorization over $G=(V,\mathcal{A})$, then for any $k$-cover, $H$, of $G$, $Z(H) \leq Z(G)^k$. \label{thm:kcover}
\end{theorem}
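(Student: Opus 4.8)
The plan is to invoke Theorem~\ref{thm:kfunc} with the log-supermodular function $g$ taken to be $f^H$ and with $f_1 = \cdots = f_k = f^G$, so that its conclusion reads $Z(H) \le Z(G)^k$ verbatim. To set this up, I would first fix a bijection between configurations of $H$ and tuples $(x^1,\ldots,x^k)\in(\{0,1\}^n)^k$: writing $i_1,\ldots,i_k$ for the $k$ copies of a variable $i\in V$, let $x^a_i$ be the value that a configuration on $H$ assigns to the copy $i_a$, so that $x^a\in\{0,1\}^n$ plays the role of the ``$a$-th sheet'' of the cover. Under this identification I would define $g(x^1,\ldots,x^k)$ to be the value of $f^H$ at the corresponding configuration of $H$. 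Because $f^G$ admits a log-supermodular factorization, so does $f^H$, and hence $g$ is log-supermodular on $\{0,1\}^{kn}$. With this choice, $\sum_{x^1,\ldots,x^k} g = Z(H)$ and $\prod_{i=1}^k \big[\sum_x f_i(x)\big] = Z(G)^k$, so the entire theorem reduces to verifying the hypothesis of Theorem~\ref{thm:kfunc}, namely the pointwise bound
\[ g(x^1,\ldots,x^k) \;\le\; \prod_{i=1}^k f^G\big(z^i(x^1,\ldots,x^k)\big). \]

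To prove this bound I would expand both sides along the factorization $f^G(x) = \prod_{i\in V}\phi_i(x_i)\prod_{\alpha\in\mathcal{A}}\psi_\alpha(x_\alpha)$ and treat the node and factor contributions separately. For the node potentials, observe that for each $i$ the values $z^1(x^1,\ldots,x^k)_i,\ldots,z^k(x^1,\ldots,x^k)_i$ are merely a coordinatewise sorting of $x^1_i,\ldots,x^k_i$; hence the two multisets coincide and the node contributions on the two sides are identical. For the factor potentials it suffices to establish, for each $\alpha\in\mathcal{A}$ separately, that the product of $\psi_\alpha$ over the $k$ copies of $\alpha$ in $H$ is at most $\prod_{c=1}^k \psi_\alpha\big(z^c(x^1,\ldots,x^k)_\alpha\big)$. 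Here I would use the covering condition: for each $i\in\alpha$ the $k$ copies of $\alpha$ attach bijectively to the $k$ copies of $i$, so the $k$ arguments that $\psi_\alpha$ receives in $H$ are vectors in $\{0,1\}^{|\alpha|}$ whose $i$-th coordinates range, as a multiset, over exactly $\{x^1_i,\ldots,x^k_i\}$. Applying Lemma~\ref{lem:prodlog} to $\psi_\alpha$ and these $k$ vectors bounds their product by the product of $\psi_\alpha$ over their coordinatewise sorting, and that sorting is precisely $\big(z^c(x^1,\ldots,x^k)_\alpha\big)_{c=1}^k$, which gives the factorwise inequality. Multiplying the node and factor bounds over all $i\in V$ and $\alpha\in\mathcal{A}$ yields the required pointwise inequality, and Theorem~\ref{thm:kfunc} then finishes the argument.

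The main obstacle is bookkeeping rather than a genuine analytic difficulty: one must track the permutations supplied by the covering map carefully enough to confirm that, factor by factor, the arguments seen by the copies of $\alpha$ in $H$ have the same per-coordinate multisets as the sorted vectors $z^c(x^1,\ldots,x^k)_\alpha$, so that Lemma~\ref{lem:prodlog} applies with the relevant multisets identified. Once this correspondence is pinned down, the factorwise inequality is immediate from the log-supermodularity of each $\psi_\alpha$, and no property of $H$ beyond the local bijection of neighborhoods guaranteed by the covering is needed.
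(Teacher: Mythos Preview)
Your proposal is correct and follows essentially the same route as the paper: identify configurations on $H$ with tuples $(x^1,\ldots,x^k)$ via a labeling of the sheets, use Lemma~\ref{lem:prodlog} factor by factor (together with the observation that each copy of $\alpha$ sees, in coordinate $i$, a permutation of $x^1_i,\ldots,x^k_i$) to obtain the pointwise bound $f^H(x^1,\ldots,x^k)\le\prod_c f^G(z^c(x^1,\ldots,x^k))$, and then invoke Theorem~\ref{thm:kfunc}. Your treatment is slightly more explicit about the node potentials and the bookkeeping of the covering permutations, but the argument is the same.
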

\begin{proof}
Let $H$ be a $k$-cover of $G$.  Divide the vertices of $H$ into $k$ sets $S_1,\ldots,S_k$ such that each set contains exactly one copy of each vertex $i\in V$.  Let the assignments to the variables in the set $S_i$ be denoted by the vector $x^i$.  

For each $\alpha\in\mathcal{A}$, let $y^i_\alpha$ denote the assignment to the $i^{th}$ copy of $\alpha$ by the elements of $x^1,\ldots,x^k$.  By Lemma \ref{lem:prodlog},
\begin{align*}
\prod_{i=1}^k \psi_\alpha(y^i_\alpha) &\leq \prod_{i=1}^k \psi_\alpha(z^i(y^1_\alpha,\ldots,y^k_\alpha))
= \prod_{i=1}^k \psi_\alpha(z^i(x^1_\alpha,\ldots,x^k_\alpha))
= \prod_{i=1}^k \psi_\alpha(z^i(x^1,\ldots,x^k)_\alpha)
\end{align*}

From this, we can conclude that $f^H(x^1,\ldots,x^k) \leq \prod_{i=1}^k f^G(z^i(x^1,\ldots,x^k))$.  Now, by Theorem \ref{thm:kfunc},
\[Z(H) = \sum_{x^1,\ldots,x^k} f^H(x^1,\ldots,x^k) \leq \prod_{i=1}^k [\sum_{x^i} f^G(x^i)] = Z(G)^k\]
\end{proof}

This theorem settles the conjecture of \cite{sudderth07} for any log-supermodular function that admits a pairwise binary factorization.  Indeed, the above theorem solves the problem for a larger class of log-supermodular graphical models:
\begin{corollary}
If $f:\{0,1\}^n\rightarrow \mathbb{R}_{\geq 0}$ admits a log-supermodular factorization over $G = (V,\mathcal{A})$, then $Z_\mathrm{B}(G) \leq Z(G)$.
\end{corollary}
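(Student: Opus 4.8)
The plan is to combine the two ingredients that the paper has already assembled: Vontobel's characterization of the Bethe partition function (Theorem \ref{thm:pascal}) and the per-cover bound just established (Theorem \ref{thm:kcover}). Since $f$ admits a log-supermodular factorization over $G$, the remark recorded at the end of the graph-covers subsection guarantees that $f^H$ admits a log-supermodular factorization over any $k$-cover $H$ of $G$. Thus Theorem \ref{thm:kcover} applies to each such $H$ and yields $Z(H) \leq Z(G)^k$ for every $H \in \mathcal{C}^k(G)$ and every $k$.

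First I would fix $k$ and sum this inequality over all $k$-covers, obtaining $\sum_{H \in \mathcal{C}^k(G)} Z(H) \leq |\mathcal{C}^k(G)| \, Z(G)^k$. Dividing both sides by $|\mathcal{C}^k(G)|$ and taking the $k$-th root gives $\sqrt[k]{\sum_{H \in \mathcal{C}^k(G)} Z(H)/|\mathcal{C}^k(G)|} \leq Z(G)$, a bound that holds uniformly in $k$ since the right-hand side does not depend on $k$.

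Then I would pass to the limit superior as $k \to \infty$. Because the right-hand side $Z(G)$ is a fixed constant, the $\limsup$ of the left-hand side is also bounded above by $Z(G)$; and Theorem \ref{thm:pascal} identifies precisely this $\limsup$ with $Z_\mathrm{B}(G)$. Combining these two facts gives $Z_\mathrm{B}(G) \leq Z(G)$, which is the claim.

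I do not expect a genuine obstacle: the corollary is essentially a repackaging of Theorems \ref{thm:pascal} and \ref{thm:kcover}, with all of the real difficulty having been absorbed into the variant of the four functions theorem (Theorem \ref{thm:kfunc}) that underlies the per-cover bound. The only point warranting a sentence of care is verifying that the log-supermodular factorization of $f$ genuinely lifts to \emph{every} cover appearing in the average, so that Theorem \ref{thm:kcover} is applicable term by term; this is exactly the observation already noted for $f^H$ in the graph-covers subsection.
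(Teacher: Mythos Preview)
Your proposal is correct and is precisely the argument the paper intends: the paper's own proof is the single line ``This follows directly from Theorem~\ref{thm:kcover} and Theorem~\ref{thm:pascal},'' and you have simply spelled out the averaging and $\limsup$ steps that make this deduction explicit. The extra sentence about the factorization lifting to covers is harmless but not strictly needed, since Theorem~\ref{thm:kcover} already has this built into its hypotheses and proof.
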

\begin{proof}
This follows directly from Theorem \ref{thm:kcover} and Theorem \ref{thm:pascal}.
\end{proof}
As the value of the Bethe approximation at any of the fixed points of BP is always a lower bound on $Z_\mathrm{B}(G)$, the conclusion of the corollary holds for any fixed point of the BP algorithm as well.

\begin{corollary}
If $f:\{0,1\}^n\rightarrow \mathbb{R}_{\geq 0}$ admits a log-supermodular factorization over $G = (V,\mathcal{A})$, then 
\[Z_\mathrm{B}(G) = \lim_{k\rightarrow\infty} \sqrt[k]{\sum_{H\in \mathcal{C}^k(G)} Z(H)/|\mathcal{C}^k(G)|}\]
where $\mathcal{C}^k(G)$ is the set of all $k$-covers of $G$. 
\end{corollary}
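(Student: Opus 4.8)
The plan is to start from Vontobel's identity (Theorem \ref{thm:pascal}), which already gives
\[
Z_\mathrm{B}(G) = \limsup_{k\to\infty} \sqrt[k]{a_k}, \qquad a_k \triangleq \frac{1}{|\mathcal{C}^k(G)|}\sum_{H\in\mathcal{C}^k(G)} Z(H),
\]
and to upgrade this $\limsup$ to a genuine limit. Since a bounded sequence whose $\limsup$ is known converges as soon as its $\liminf$ is at least as large, the whole task reduces to proving $\liminf_{k}\sqrt[k]{a_k}\ge Z_\mathrm{B}(G)$; the reverse inequality is automatic. Two ingredients from the preceding results drive this. First, Theorem \ref{thm:kcover} gives the uniform bound $Z(H)\le Z(G)^k$ for every $k$-cover $H$, hence $\sqrt[k]{a_k}\le Z(G)$ for all $k$: the sequence is bounded, so its $\liminf$ and $\limsup$ are finite and it suffices to rule out oscillation. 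Second, I would exploit a supermultiplicative structure of the cover sums.

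First I would establish supermultiplicativity of the unnormalized sums $b_k\triangleq\sum_{H\in\mathcal{C}^k(G)}Z(H)$. The disjoint union of a $k$-cover $H_1$ and an $l$-cover $H_2$ is a $(k+l)$-cover with $Z(H_1\sqcup H_2)=Z(H_1)Z(H_2)$, and this assignment embeds $\mathcal{C}^k(G)\times\mathcal{C}^l(G)$ injectively into $\mathcal{C}^{k+l}(G)$. Restricting the sum defining $b_{k+l}$ to the image of this embedding and using nonnegativity yields
\[
b_{k+l}\;\ge\;\Big[\sum_{H_1\in\mathcal{C}^k(G)}Z(H_1)\Big]\Big[\sum_{H_2\in\mathcal{C}^l(G)}Z(H_2)\Big]\;=\;b_k\,b_l .
\]
By Fekete's lemma, $\tfrac1k\log b_k$ then converges to $\sup_k\tfrac1k\log b_k$.

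The hard part will be transferring this clean multiplicative behaviour through the normalization by $|\mathcal{C}^k(G)|$, because the count of covers is itself super-exponential and is not submultiplicative: passing from $b_k$ to $a_k=b_k/|\mathcal{C}^k(G)|$ turns exact supermultiplicativity into supermultiplicativity with a defect, $\log a_{k+l}\ge \log a_k+\log a_l-\Delta(k,l)$, where $\Delta$ measures the failure of $|\mathcal{C}^k(G)|\,|\mathcal{C}^l(G)|\ge|\mathcal{C}^{k+l}(G)|$. Controlling $\Delta$ after taking $k$-th roots is the crux, and it is here that the uniform bound $\sqrt[k]{a_k}\le Z(G)$ coming from log-supermodularity must be combined with the supermultiplicativity of $b_k$ to prevent $\sqrt[k]{a_k}$ from dipping below its $\limsup$ infinitely often. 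I would analyse $\Delta$ using the explicit structure of the $k$-covers (for each edge of the factor graph, a permutation of the $k$ sheets); and, should the defect prove too coarse to absorb directly, I would fall back on applying Theorem \ref{thm:pascal} and Theorem \ref{thm:kcover} to a \emph{fixed} $k$-cover $H$, which is again log-supermodular. This controls the averages over the covers of $H$, a distinguished cofinal family inside $\bigcup_m\mathcal{C}^{mk}(G)$, and should furnish the missing lower bound on $\liminf_k\sqrt[k]{a_k}$ for all large indices. Once $\liminf_k\sqrt[k]{a_k}\ge Z_\mathrm{B}(G)$ is in hand, the limit exists and equals $Z_\mathrm{B}(G)$ by Theorem \ref{thm:pascal}, completing the proof.
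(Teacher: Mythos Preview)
Your plan has a genuine gap, and it is precisely the one you flag yourself: the supermultiplicativity of the unnormalized sums $b_k$ does not survive division by $|\mathcal{C}^k(G)|$, and you offer no mechanism to control the defect $\Delta(k,l)$. Since $|\mathcal{C}^k(G)|$ grows like a product of factorials in $k$ (one permutation per factor-graph edge), $\Delta(k,l)$ is of order $k\log k$, which is far too large to absorb into a Fekete-type argument after taking $k$-th roots. The upper bound $\sqrt[k]{a_k}\le Z(G)$ from Theorem~\ref{thm:kcover} tells you nothing about how low the sequence can dip, so it cannot rescue this.

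The paper's argument bypasses all of this with a single observation you almost reach in your ``fallback'' sentence but do not exploit directly: every $k$-cover $H$ again admits a log-supermodular factorization, so the previous corollary ($Z_\mathrm{B}\le Z$) applies to $H$ itself, giving $Z(H)\ge Z_\mathrm{B}(H)$. Combined with the elementary fact that lifting any pseudomarginal $\tau$ on $G$ to its $k$-fold copy on $H$ yields $\log Z_\mathrm{B}(H,\tau^H)=k\log Z_\mathrm{B}(G,\tau)$, one has $Z_\mathrm{B}(H)\ge Z_\mathrm{B}(G)^k$. Hence every single term in the average satisfies $Z(H)\ge Z_\mathrm{B}(G)^k$, so $\sqrt[k]{a_k}\ge Z_\mathrm{B}(G)$ for \emph{all} $k$, not just along a subsequence. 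The $\liminf$ bound is then immediate, and no multiplicative structure of the cover counts is needed at all. You should replace the Fekete route with this direct termwise lower bound.
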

\begin{proof}
By Theorem \ref{thm:kcover} and the definition of $Z_\mathrm{B}$, $Z(H) \geq Z_\mathrm{B}(H)\geq Z_\mathrm{B}(G)^k$ for any k-cover $H$ of $G$.  The corollary then follows from Theorem \ref{thm:pascal}.
\end{proof}

%

\subsection{Beyond Log-supermodularity} \label{sec:switching}
While Theorem \ref{thm:kcover} is a statement only about log-supermodular functions, we can use Theorem \ref{thm:kfunc} to infer similar results even when the function under consideration is not log-supermodular.  As an example of such an application, we consider the problem of counting the number of independent sets in a given graph, $G = (V,E)$.  An independent set, $I\subseteq V$, in $G$ is a subset of the vertices such that no two adjacent vertices are in $I$.  We define the following function:
\[I^G(x_1,\ldots,x_{|V|}) = \prod_{(i,j)\in E} (1-x_ix_j)\]
which is equal to one if the nonzero $x_i$'s define an independent set and zero otherwise.  As every potential function depends on at most two variables, $I^G$ factorizes over the graph $G=(V,E)$.  Notice that $f^G$ is log-submodular, not log-supermodular.

In this section, we will focus on bipartite graphs:  $G = (V, E)$ is bipartite if we can partition the vertex set into two sets $A\subseteq V$ and $B= V\setminus A$ such that $A$ and $B$ are independent sets.  Examples of bipartite graphs include single cycles, trees, and grid graphs.  We will denote bipartite graphs as $G = (A,B,E)$.

For any bipartite graph $G = (A,B,E)$, $I^G$ can be converted into a log-supermodular graphical model by a simple change of variables.  Define $y_a = x_a$ for all $a\in A$ and $y_b = 1 - x_b$ for all $b\in B$.  We then have
\begin{align*}
I^G(x_1,\ldots,x_{|V|}) & =  \prod_{(i,j)\in E} (1-x_ix_j)\\
& =  \prod_{(a,b)\in E, a\in A, b\in B} (1-y_a(1-y_b))\\
& \triangleq \overline{I}^G(y_1,\ldots,y_{|V|}).
\end{align*}
$\overline{I}^G$ admits a log-supermodular factorization over $G$ and $\sum_{y} \overline{I}^G(y) = \sum_x I^G(x)$.  Similarly, for any graph cover $H$ of $G$, we have $\sum_{y} \overline{I}^H(y) = \sum_x I^H(x)$.  Consequently, by Theorem \ref{thm:kfunc}, we can conclude that $Z(G) \geq Z_\mathrm{B}(G)$.  

Similar observations can, for example, be used to show that the Bethe partition function provides a lower bound on the true partition function for other problems that factor over pairwise bipartite graphical models (e.g., the antiferromagnetic Ising model on a grid, counting the number of vertex covers of a bipartite graph, counting the number of satisfying assignments of a monotone 2-SAT instance whose corresponding graphical structure is bipartite).

\section{Conclusions}

While the results presented above were discussed in the case that the temperature parameter, $T$, was equal to one, they easily extend to any $T\geq 0$ (as exponentiation preserves log-supermodularity in this case).  Hence, all of the bounds discussed above can be extended to the problem of maximizing a log-supermodular function.  In particular, the inequality in Theorem \ref{thm:kcover} suggests that the maximizing assignment on any graph cover must correspond to a lift of a maximizing assignment on the base graph.

This work also suggests a number of directions for future research.  While the above work provides lower bounds on the partition function, similar ideas may be able to provide upper bounds as well.  We note that related work on the Bethe approximation for permanents has already begun to explore these possibilities \cite{betheperm}\cite{gurvitsnew}.  Similarly, an analog of Theorem \ref{thm:kfunc} for log-submodular functions may also be useful in the pursuit of upper bounds. The primary difficulty is that marginal distributions of log-submodular functions are not necessarily log-submodular, but perhaps upper bounds can be obtained when restricting to families of log-submodular functions all of whose marginals are also log-submodular.

\section*{Acknowledgments}
The author would like to thank Nicolas Macris, for many useful discussions about the ferromagnetic Ising model and correlation inequalities, and Pascal Vontobel, for his comments and suggestions during the preparation of this work.

\bibliographystyle{ieeetr}
\bibliography{biblio}

\end{document}